\newtheorem{theorem}{Theorem}[section]
\begin{document}

\title{Distinguishability notion based on Wootters statistical distance: application to discrete maps}

\author{Ignacio S. Gomez}
\email{nachosky@fisica.unlp.edu.ar}
\affiliation{IFLP, UNLP, CONICET, Facultad de Ciencias Exactas, Calle 115 y 49, 1900 La Plata, Argentina}
\author{M. Portesi}
\email{portesi@fisica.unlp.edu.ar}
\affiliation{IFLP, UNLP, CONICET, Facultad de Ciencias Exactas, Calle 115 y 49, 1900 La Plata, Argentina}
\author{P. W. Lamberti}
\email{lamberti@famaf.unc.edu.ar}
\affiliation{Facultad de Matem\'{a}tica, Astronom\'{i}a y F\'{i}sica (FaMAF), Universidad Nacional de C\'{o}rdoba, Avenida Medina Allende S/N, Ciudad Universitatia, X5000HUA, C\'{o}rdoba, Argentina}

\date{\today}

\begin{abstract}
We study the distinguishability notion given by Wootters for states represented by probability density functions. This presents the particularity that it can also be used for defining a distance in chaotic unidimensional maps. Based on that definition, we provide a metric $\overline{d}$ for an arbitrary discrete map. Moreover, from $\overline{d}$ we associate a metric space to each invariant density of a given map, which results to be the set of all distinguished points when the number of iterations of the map tends to infinity. Also, we give a characterization of the wandering set of a map in terms of the metric $\overline{d}$ which allows to identify the dissipative regions in the phase space. We illustrate the results in the case of the logistic and the circle maps numerically and theoretically, and we obtain $\overline{d}$ and the wandering set for some characteristic values of their parameters.
\end{abstract}

\pacs{05.45.Ac, 02.50.Cw, 0.250.-r, 05.90.+m}
\maketitle
\setcounter{secnumdepth}{5}
\noindent Keywords: discrete maps -- invariant density -- metric space -- wandering set

\section{Introduction}\label{sec_intro}
The concept of distance constitutes, in mathematics but also in physics, a measure of how apart two ``objects" are.
Depending on the context, a distance may refer to a physical length or to a metric function which gives place to the theory of metric spaces \cite{kolmogorov,kelley,lages}. In physics, it is well known that the information about a system is contained in the state function, the evolution of which accounts for the features of the dynamics.
When there is a limitation or uncertainty about the knowledge of the system (classical or quantum), it is common to consider the states as represented by probability density functions. Distances between probability density functions give so called \emph{statistical measures of distinguishability} between states \cite{fisher}. Many statistical distances have been defined and for several purposes, like Kullback divergence \cite{kullback}, Wootters \cite{wooters} and Monge \cite{monge} distances in quantum mechanics, or the metric distance given by the Fisher--Rao tensor in information geometry \cite{amari}.

In the context of discrete maps, one of the most important statistical features of underlying dynamics is given by the probability distribution that is a fixed point of the corresponding Frobenius--Perron operator associated to the map \cite{lasota,beck}. This is the so called invariant density which plays an important role in ergodic theory, i.e. the study of the measures that are invariant under the invariant densities \cite{walters}. Relevant properties that lay at the foundations of statistical mechanics like ergodicity and mixing are described in terms of invariant densities by means of the corresponding levels of the ergodic hierarchy \cite{walters}, with quantum extensions \cite{olimpia,nacho1,nacho2,nacho3,nacho4} that allow to characterize aspects of quantum chaos \cite{casati}.
The relevance of discrete maps lies in the fact that they serve as simple but useful models in biology, physics, economics, etc \cite{may,strogatz,lloyd}. Specifically, they have proven to be powerful tools for testing features of chaotic and complex phenomena \cite{tirnakli,tsallis,borges,csordas}.

In the case of the logistic map \cite{may,strogatz}, from the invariant density Johal \cite{johal} proposed a statistical distance which serves to characterize the chaotic regime. In this paper we explore, analytically and numerically, this idea by redefining the distance given in Ref. \cite{johal} as a metric, and we apply this to characterize two emblematic chaotic maps: the logistic map and the circle map. We focus on some values of their parameters which are in correspondence to the characteristic regimes of the dynamics. Moreover, we define a metric space associated to values of the parameters of each map that takes into account the topology of the phase space according to dynamics.

The work is organized as follows. In Section \ref{sec_prelim} we introduce the notions used throughout the paper. Section \ref{sec_metric} is devoted to
recall the proposal of a statistical distance made by Johal \cite{johal} and, using equivalence classes, we redefine it as a metric. Next, we propose a metric space, induced by this metric, composed by all the phase space points that can be distinguished in the limit of large number of iterations of the map. From the metric, we give a characterization of the wandering set of a map that allows one to classify the dissipative regions of the dynamics in phase space. Considering some characteristic values of their parameters, in Section \ref{sec_models} we illustrate the results for the logistic map and for the circle map. In section \ref{sec_conclusions} we draw some conclusions and outline future research directions.

\section{Preliminaries}\label{sec_prelim}
We recall some notions from probability theory, discrete maps and metric spaces used throughout the paper.
\subsection{Probability density and cumulative distribution functions}
A probability density function defined over an abstract space $\Gamma$ (typically, a subset of $\mathbb{R}^m$) is any nonnegative function $p:\Gamma\rightarrow\mathbb{R}_{+}$ such that
\begin{eqnarray}\label{0-1}
\int_{\Gamma}p(x)dx=1
\end{eqnarray}
where $\Gamma$ is called the \emph{space of events}. For instance, in an experiment, $\Gamma$ represents the set of all possible outcomes of the phenomenon being observed. If $\Gamma$ is composed by a discrete, say $N$, number of results then one has a discrete probability distribution which can be represented by a column vector $(p_1, p_2,\cdots ,p_N)^t$, where $p_i$ stands for the probability that the ith result occurs, and the normalization condition \eqref{0-1} now reads $\sum_{i=1}^{N}p_i=1$.

Given the probability distribution $p$ with $\Gamma\subseteq\mathbb{R}$, the so called cumulative distribution function associated to $p(x)$ is defined as
\begin{eqnarray}\label{0-2}
C(x)=\int_{\{t\leq x\}}p(t)dt
\end{eqnarray}
$C(x)$ is the probability that the variable takes a value less than or equal to $x$. When $\Gamma$ is composed by a number $N$ of results, the definition \eqref{0-2} reads as $C_i=\sum_{j=1}^{i}p_j$.
The cumulative distribution function is frequently used in statistical analysis, since an estimation of $C(x)$ can be given directly from the empirical distribution function in terms of the experimental data.
\subsection{Discrete maps and invariant densities}

Given a set $\Gamma$ and a continuous function $f:\Gamma\rightarrow\Gamma$, it is said that the sequence $\{x_n\}_{n\in\mathbb{N}_0}\subseteq\Gamma$ such that
\begin{eqnarray}\label{1-1}
x_{n+1}=f(x_n) \ \ \ \ \forall \ n\in\mathbb{N}_0 \ \ , \ \ x_0\in\Gamma
\end{eqnarray}
defines a \emph{discrete map}. More generally, one can allow $f$ to have finite discontinuities. From the physical viewpoint, a discrete map  models a system, the dynamics of which is given by iterating Eq. \eqref{1-1} where each iteration corresponds to a time step. That is, if the system is initially in a state $x_0$, then $x_n$ represents the state after $n$ time steps.
If there exists an element $c\in \Gamma$ such that $f(c)=c$, it is said that $c$, is a \emph{fixed point} of $f$.
Fixed points are physically interpreted as stationary states of the system subject to the map.

The dynamics of a discrete map can be characterized in terms of probability density functions in the following way.
The Frobenius--Perron operator $P:\mathbb{L}^1(\Gamma)\rightarrow\mathbb{L}^1(\Gamma)$ associated to the map \eqref{1-1} is given by \cite{lasota}
\begin{eqnarray}\label{1-1bis}
\int_{A}P\phi(x)dx=\int_{f^{-1}(A)}\phi(x)dx \nonumber
\end{eqnarray}
for all $\phi\in \mathbb{L}^1(\Gamma)$ and $A\subseteq\Gamma$, where $f^{-1}(A)$ is the preimage of $A$.
Any nonnegative function $\rho\in \mathbb{L}^1(\Gamma)$, normalized to $1$ in $\Gamma$, and such that
\begin{eqnarray}\label{1-2}
P\rho(x)=\rho(x) \ \ \ \ \forall \ x\in\Gamma \nonumber
\end{eqnarray}
is called \emph{invariant density} of the map. Mathematically, invariant densities are a special case of invariant measures \cite{lasota}.
In dynamical systems theory, the invariant densities represent stationary states which allows one to study the system in the asymptotic limit of large times \cite{borges2}.
As an example, in the case of the logistic map in its chaotic regime one has $f(x)=4x(1-x)$ and $\Gamma=[0,1]$. It is found that \cite[p. 7]{lasota}
\begin{eqnarray}\label{1-2frobenius}
&Pf(x)\nonumber\\
&=\frac{1}{4\sqrt{1-x}}\left\{f\left(\frac{1}{2}-\frac{1}{2}\sqrt{1-x})+f(\frac{1}{2}+\frac{1}{2}\sqrt{1-x}\right)\right\} \nonumber
\end{eqnarray}
for all $f\in\mathbb{L}^1([0,1])$. By successively applying the operator $P$ over any initial distribution $f$ (i.e. $Pf,P^2f=P(Pf),P^3f=P(P(Pf)),\ldots$) one obtains an analytical expression for the invariant density $\rho(x)$ in the limit as \cite[p. 7]{lasota}
\begin{eqnarray}\label{1-2invariant}
\rho(x)=\lim_{n\rightarrow\infty}P^nf(x)=\frac{1}{\pi\sqrt{x(1-x)}} \ \ \ , \ \ \ \ x\in(0,1) \nonumber
\end{eqnarray}
In general, except for some particular maps, the invariant density has no analytical expression, and therefore one has to compute it numerically \cite{csordas}.

The method consists of constructing an histogram to show the frequency with which states along a sequence $\{x_n\}$
fall into given regions of phase space $\Gamma$. In order to illustrate the procedure, let us consider $\Gamma=[0,1]$. Then, we simply divide the interval $[0,1]$ into $M$ discrete nonintersecting intervals of the form
\begin{eqnarray}\label{2-9bis}
[\frac{i-1}{M},\frac{i}{M})   \ \ \ \ i=1,\ldots,M \nonumber
\end{eqnarray}
Next step is to consider an initial state $x_0$ and calculate the trajectory
\begin{eqnarray}\label{2-9bisbis}
x_0 \ , \ x_1=f(x_0) \ , \ x_2=f(f(x_0)) \ , \ \ldots \ , \ x_T=f^T(x_0) \nonumber
\end{eqnarray}
of length $T$ with $T\gg M$. The fraction $\rho_i$ of the $T$ states visiting the $ith$ interval is
\begin{eqnarray}\label{2-9invariant1}
\rho_i=\frac{\sharp\{x_{\tau}\in [\frac{i-1}{M},\frac{i}{M}) \ | \ \tau=0,1,\ldots,T\}}{T} \ , \ i=1,\ldots,M \nonumber
\end{eqnarray}
Finally, the invariant density $\rho$ is given by
\begin{eqnarray}\label{2-9invariant2}
\rho(x)=\sum_{i=1}^{M}\rho_i \ \chi_{[\frac{i-1}{M},\frac{i}{M})}(x) \nonumber
\end{eqnarray}
where $\chi_{[\frac{i-1}{M},\frac{i}{M})}(x)$ denotes the characteristic function of the interval $[\frac{i-1}{M},\frac{i}{M})$.
It should be noted that in the limit $M,T\rightarrow\infty$, one has that $\rho(x)$ does not depend on the starting point.
The cumulative distribution associated to $\rho$ results as
\begin{eqnarray}\label{2-9invariant3}
C(x)=\sum_{i=1}^{M}\left(\sum_{j=1}^{i}\rho_j\right) \ \chi_{[\frac{i-1}{M},\frac{i}{M})}(x) \nonumber
\end{eqnarray}

\subsection{Metric spaces}
We consider a set $\Gamma$ and a nonnegative function $d:\Gamma\times \Gamma\rightarrow\mathbb{R}_{+}$ satisfying the following axioms:
\begin{itemize}
  \item[$(a)$] Distinguishability: $d(x,y)=0$ \textrm{iff} $x=y$
  \item[$(b)$] Symmetry: $d(x,y)=d(y,x)$
  \item[$(c)$] Triangle inequality: $d(x,y)+d(y,z)\geq d(x,z)$
\end{itemize}
The function $d$ is called a \emph{metric} and the pair $(\Gamma,d)$ defines a \emph{metric space}. A standard example of metric space is $\mathbb{R}^m$ with the usual Euclidean metric.
Axiom $(a)$ refers to the intuitive idea of how to distinguish two points when they are at a nonzero distance from each other.
Axiom $(b)$ expresses that any legitimate distance must be symmetric with respect to the pair of points. Axiom $(c)$ is the well known Pythagoras triangle inequality and is crucial to extend many theorems and properties out of $\mathbb{R}^m$.
Relaxing $(b)$ and $(c)$ defines a \emph{divergence}, and relaxing only $(c)$ gives what is called a \emph{distance}.

An example is the well known Kullback--Leibler divergence in the context of information theory
as a relative entropy between two probability density functions \cite{kullback}. In quantum mechanics, the Jensen--Shannon divergence is an example of distance in Hilbert space and also can be used to define a measure of distinguishability and entanglement between quantum states \cite{plastino,lamberti1,lamberti2}. Moreover, using notions of entropy and purification one can also define metrics in quantum state spaces \cite{lamberti3}.
The concept of metric space arises as a generalization of the Euclidean space, many of its relevant properties and results concerning completeness, conexity, etc. have been extended for abstract spaces in general \cite{kolmogorov,kelley,lages}.
In this work we consider the state space $\Gamma$ as a subset of some Euclidean space $\mathbb{R}^m$.

\section{Metrics based on invariant densities}\label{sec_metric}
Based on the work by Johal \cite{johal} we recall the motivation to define a distance for discrete maps. From this we provide a metric in a mathematically strict sense, i.e. obeying axioms $(a)$--$(c)$. Then we define a `natural' metric space associated to a discrete map, induced by the metric.

\subsection{A metric for discrete maps}

Wootters proposed a distance between two probability distributions that gives a notion of distinguishability between states (classical or quantum) \cite{wooters}. A typical example is to consider two weighted coins represented by corresponding discrete probability distributions $p=(p_1,p_2)$ and $q=(q_1,q_2)$. The Wootters distance $D$ between $p$ and $q$ is defined as
\begin{eqnarray}\label{2-1}
&D(p,q)=\lim_{n\rightarrow\infty}\frac{1}{\sqrt{n}}\times [\textrm{maximum number of mutually} \nonumber\\
&\textrm{distinguishable intermediate probablities in} \ n \ \textrm{trials}] \nonumber
\end{eqnarray}
The relevant conclusion of Wootters' contribution has a double value: on the one hand this statistical distance can be defined on any probability space. On the other hand, and maybe more important, is the fact that $D(p,q)$ is involved in determining the geometry of the curved manifold of all distinguishable probability distributions.

For unidimensional discrete maps showing chaotic dynamics, inspired by Wootters ideas, in \cite{johal} was introduced a notion of distance $d$ between two points $x^A,x^B$ of state space $\Gamma$, given by
\begin{eqnarray}\label{2-2}
&d(x^A,x^B)=\lim_{n\rightarrow\infty} [\textrm{probability of visiting the interval} \nonumber\\
&\textrm{between} \ x^A \ \textrm{and} \ x^B \ \textrm{after} \ n \ \textrm{time steps} ] \nonumber
\end{eqnarray}
Now, if $\rho(x)$ is the invariant density of the discrete map under study, Johal argues that the distance $d(x^A,x^B)$ can be expressed as an integral of the invariant density on the interval. From Ref. \cite{johal}
\begin{eqnarray}\label{2-3}
&d(x^A,x^B)=\left|\int_{x^A}^{x^B}\rho(x)dx \right|
\end{eqnarray}

Using the properties of the integral, it is straightforward to check that $d(x^A,x^B)$ satisfies axioms $(b)$ and $(c)$. Concerning axiom $(a)$, there exist situations for which it is not fulfilled. For instance, if $\rho(x)=0$ in almost every point of some interval $[x^A,x^B]$ then from \eqref{2-3} it follows that $d(x^A,x^B)=0$ even when $x^A\neq x^B$. Strictly speaking, formula \eqref{2-3} does not define a metric unless one can guarantee that there is no interval $[x^A,x^B]$ for which $\rho(x)\equiv0$ a.e. And as we shall see, this condition cannot be guaranteed for all the regimes of the dynamics of a given map.

In order to solve this problem we propose to use equivalence classes instead of points in $\Gamma$. More precisely, we define the following relation $\sim$ in $\Gamma\times \Gamma$:
\begin{eqnarray}\label{2-4}
x \sim x^{\prime} \ \ \ \Longleftrightarrow \ \ \ d(x,x^{\prime})=0
\end{eqnarray}
Let us show that $\sim$ is an equivalence relation. By \eqref{2-3} one has $d(x^A,x^A)=0$ then $x^A\sim x^A$ for all $x^A\in \Gamma$. Due to symmetry of formula \eqref{2-3}, it follows that $x^A \sim x^B$ if and only if $x^B \sim x^A$ for all $x^A,x^B\in \Gamma$. If $x^A,x^B,x^C\in \Gamma$ with $x^A\sim x^B$ and $x^B\sim x^C$ then $d(x^A,x^B)=0=d(x^B,x^C)$, then by the nonnegativity of $d$ and the triangle inequality one has $0\leq d(x^A,x^C)\leq d(x^A,x^B)+d(x^B,x^C)=0$, so one has $d(x^A,x^C)=0$, i.e. $x^A \sim x^C$.

Considering the set of classes
$\Gamma/\!\sim \ = \{\overline{x}:x\in\Gamma\}$, where $\overline{x}=\{x^{\prime}\in\Gamma:x^{\prime}\sim x\}$, one can define rigorously a metric $\overline{d}$ between elements of $\Gamma/\sim$ in the following way. Let us consider two points $x^A,x^B$, then
\begin{eqnarray}\label{2-5}
&\overline{d}:\Gamma/\!\sim \ \times \ \Gamma/\!\sim \ \longrightarrow \ \mathbb{R}_{+} \nonumber\\
& \overline{d}(\overline{x^A},\overline{x^B}):=d(x^A,x^B)
\end{eqnarray}
being $\overline{x^A},\overline{x^B}$ the classes of $x^A$ and $x^B$ respectively. Let us show that $\overline{d}$ is well defined in the sense that it is independent of the representative elements chosen for each class. For arbitrary
$x\in \overline{x^A},x^{\prime}\in \overline{x^B}$ then one has $d(x,x^A)=d(x^{\prime},x^B)=0$. By applying the triangle inequality and the symmetry property
\begin{eqnarray}
&d(x,x^{\prime})\leq d(x,x^A)+d(x^A,x^B)+d(x^B,x^{\prime})\nonumber\\
&=d(x^A,x^B)\nonumber
\end{eqnarray}
and
\begin{eqnarray}
&d(x^A,x^B)\leq d(x^A,x)+d(x,x^{\prime})+d(x^{\prime},x^B)\nonumber\\
&=d(x,x^{\prime})\nonumber
\end{eqnarray}
Thus, $d(x,x^{\prime})=d(x^A,x^B)$ and by the definition \eqref{2-5} it follows that $\overline{d}(\overline{x},\overline{x^{\prime}})=\overline{d}(\overline{x^A},\overline{x^B})$.
Using \eqref{2-5} and since $d$ satisfies the axioms $(b)$ and $(c)$ then $\overline{d}$ also satisfies them. Now we can see that $\overline{d}$ satisfies the distinguishability axiom $(a)$. From definitions \eqref{2-4} and \eqref{2-5} one has
\begin{eqnarray}\label{2-6}
&\overline{d}(\overline{x}^A,\overline{x}^B)=0 \ \ \Longleftrightarrow \ \ d(x^A,x^B)=0 \nonumber\\
&\ \ \Longleftrightarrow \ \ x^A \sim x^B \ \ \Longleftrightarrow \ \ \overline{x}^A=\overline{x}^B \nonumber
\end{eqnarray}
Therefore, $\overline{d}$ is a metric on $\Gamma/\!\sim$.

\subsection{A metric space associated to the map dynamics}

Given a unidimensional discrete map characterized by $f:\Gamma\rightarrow\Gamma$, with $\Gamma\subseteq\mathbb{R}$, and the metric $\overline{d}$ induced by the invariant density $\rho:\Gamma\rightarrow\mathbb{R}_{+}$, we say that $(\Gamma/\sim,\overline{d})$ is the \emph{metric space associated to the map dynamics} generated by $f$.
In this way, given the pair $\{\Gamma,f\}$ one has the following `canonical' association between discrete maps and metric spaces:
\begin{eqnarray}\label{2-7}
\{\Gamma,f\} \ \longrightarrow \ \rho \ \longrightarrow \ d \ \longrightarrow \ (\Gamma/\!\sim,\overline{d})
\end{eqnarray}

In general, a discrete map can be given by a function $f_{\mathbf{r}}:\Gamma\rightarrow\Gamma$ depending on a set of real parameters $\mathbf{r}=~(r_1,\ldots,r_k)$ which play the role of controlling the type of dynamics generated. Thus, for a set of parameters $\mathbf{r}$ the map is given in the form:
\begin{eqnarray}\label{2-7bis}
x_{n+1}=f_{\mathbf{r}}(x_n) \ \ \ \ \forall \ n\in\mathbb{N}_0 \ \ , \ \ x_0\in\Gamma
\end{eqnarray}
Therefore, one has a parameterized family of correspondences of the form \eqref{2-7}, i.e.
\begin{eqnarray}\label{2-8}
\{\Gamma,f_{\mathbf{r}}\} \ \longrightarrow \ \rho_{\mathbf{r}} \ \longrightarrow \ d_{\mathbf{r}} \ \longrightarrow \ (\Gamma/\!\sim_{\mathbf{r}},\overline{d}_{\mathbf{r}})
\end{eqnarray}
where $d_{\mathbf{r}}$, $\overline{d}_{\mathbf{r}}$ and $\sim_{\mathbf{r}}$ are given as in Eqs. \eqref{2-3}, \eqref{2-4} and \eqref{2-5}, respectively.
Varying the parameter $\mathbf{r}$ one can study transitions in the dynamics of the map in terms of the corresponding changes in the topology of $(\Gamma/\!\sim_{\mathbf{r}},\overline{d}_{\mathbf{r}})$.

\subsection{Wandering set of a map in terms of the metric}
According to definitions \eqref{0-2}, \eqref{2-3} and by the properties of the integral it follows that
\begin{eqnarray}\label{2-8bis}
d(x^A,x^B)=\left|C(x^B)-C(x^A)\right|
\end{eqnarray}
where $C(x)$ is the cumulative distribution function of the invariant density $\rho$. This expression gives another characterization of the metric associated to a map that can be useful for studying the structure of the metric space. In fact, the so called \emph{wandering set} can be characterized in terms of the metric as follows. We recall the concept of \emph{wandering point} of a map.

Given a map generated by a continuous function $f:~\Gamma\rightarrow\Gamma$, a point $w\in\Gamma$ is said to be a \emph{wandering point} if there is a neighbourhood $U$ of $w$ and a positive integer $N$ such that for all $n>N$ one has
\begin{eqnarray}\label{2-8bisbis}
\mu(f^{n}(U)\cap U)=0 \nonumber
\end{eqnarray}
where $\mu$ is a measure defined over a $\sigma$--algebra $\Sigma$ of $\Gamma$ and $f^{n}(U)$ is the set $\{f^{n}(u):u\in U\}$. The \emph{wandering set}, which we denote as $\mathcal{W}(\Gamma)$, is defined as the set of all wandering points.

The following result relates the metric space associated to a map with its wandering set.
\begin{theorem}\label{theorem}
Let $\Gamma=[a,b]$ be the phase space of the map generated by $f:[a,b]\rightarrow[a,b]$ and let $[x^A,x^B]\subseteq~[a,b]$ be a subinterval of $[a,b]$. Then, considering the measure $\mu_{\rho}(A)=\int_{A}\rho(x)dx$ for all subset $A\subseteq[a,b]$ with $\rho(x)$ the invariant density of the map, the following propositions are equivalent:
\begin{itemize}
\item[$(a)$] $\rho(x)=0$ a.e. in $(x^A,x^B)$.
\item[$(b)$] $\overline{x}^A=\overline{x}^B$.
\item[$(c)$] $d(x^A,x^B)=0$.
\item[$(d)$] $C(x)$ is constant in $[x^A,x^B]$.
\item[$(e)$] All $x\in(x^A,x^B)$ is a wandering point.
\end{itemize}
\end{theorem}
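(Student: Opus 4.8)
\emph{Proof strategy.} The plan is to split the five statements into a cluster of essentially formal equivalences, $(a)\Leftrightarrow(b)\Leftrightarrow(c)\Leftrightarrow(d)$, which follow directly from the formulas already established, and a single genuinely dynamical equivalence $(a)\Leftrightarrow(e)$, which is where the real content lies. I would first dispatch the formal cluster and then concentrate on $(e)$. For the cluster, $(a)\Leftrightarrow(c)$ is immediate from \eqref{2-3}: since $\rho\geq 0$, the quantity $\left|\int_{x^A}^{x^B}\rho\,dx\right|$ vanishes if and only if $\rho=0$ almost everywhere on $(x^A,x^B)$. The equivalence $(b)\Leftrightarrow(c)$ is just the definition of the quotient, combining \eqref{2-4} and \eqref{2-5}: $\overline{x}^A=\overline{x}^B$ iff $x^A\sim x^B$ iff $d(x^A,x^B)=0$. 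For $(c)\Leftrightarrow(d)$ I would invoke \eqref{2-8bis} and the fact that $C$ is nondecreasing (being the cumulative of a nonnegative density); thus $d(x^A,x^B)=|C(x^B)-C(x^A)|=0$ forces $C(x^A)=C(x^B)$, and monotonicity then makes $C$ constant on the entire interval, the converse being trivial.

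The substance is the equivalence $(a)\Leftrightarrow(e)$. The direction $(a)\Rightarrow(e)$ is easy: if $\rho=0$ a.e. on $(x^A,x^B)$, then for any $x$ in the open interval I can choose a neighbourhood $U\subseteq(x^A,x^B)$, whence $\mu_\rho(f^n(U)\cap U)\leq\mu_\rho(U)=\int_U\rho\,dx=0$ for every $n$, so every such $x$ is wandering. For the converse $(e)\Rightarrow(a)$ I would first record that $\mu_\rho$ is $f$-invariant: combining the invariance $P\rho=\rho$ with the defining relation \eqref{1-1bis} yields $\mu_\rho(f^{-1}(A))=\mu_\rho(A)$, and since $\mu_\rho(\Gamma)=1<\infty$ the Poincar\'e recurrence theorem is available.

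The key step is to convert the forward-image wandering condition into a backward (recurrence) statement. If $U$ is a wandering neighbourhood with $\mu_\rho(f^n(U)\cap U)=0$ for $n>N$, then from $f^{-n}(f^n(U))\supseteq U$ one gets the inclusion $U\cap f^{-n}(U)\subseteq f^{-n}\!\left(f^n(U)\cap U\right)$, so invariance gives $\mu_\rho(U\cap f^{-n}(U))\leq\mu_\rho(f^n(U)\cap U)=0$ for all $n>N$. But $U\cap f^{-n}(U)$ is precisely the set of points of $U$ that return to $U$ at time $n$; since almost every point of a positive-measure set returns infinitely often by Poincar\'e recurrence, this forces $\mu_\rho(U)=0$, i.e. $\rho=0$ a.e. on $U$. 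Assuming $(e)$, every $x\in(x^A,x^B)$ admits such a null neighbourhood $U_x$; covering the interval by countably many of them (Lindel\"of) and summing gives $\int_{(x^A,x^B)}\rho\,dx=0$, which is $(a)$.

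The main obstacle I anticipate is exactly this passage from the forward-orbit formulation of the wandering condition to the backward formulation needed to invoke Poincar\'e recurrence, together with verifying that $\mu_\rho$ is genuinely $f$-invariant so that recurrence applies; the remaining covering and measure-theoretic bookkeeping is routine.
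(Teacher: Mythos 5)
Your proposal is correct, and for the formal cluster $(a)\Leftrightarrow(b)\Leftrightarrow(c)\Leftrightarrow(d)$ together with $(a)\Rightarrow(e)$ it coincides in substance with the paper's chain $(a)\Rightarrow(b)\Rightarrow(c)\Rightarrow(d)\Rightarrow(e)$ (the paper uses the whole interval $(x^A,x^B)$ as the wandering neighbourhood where you shrink to $U\subseteq(x^A,x^B)$; this is immaterial). The genuine divergence is in the key direction $(e)\Rightarrow(a)$. The paper argues via the histogram construction of the invariant density: it partitions $[a,b]$ into $M$ cells, places $x$ and the seed $x_0$ of a long trajectory in a cell $I_{j_0}\subset U$, and distinguishes two cases --- either the orbit revisits $I_{j_0}$ only finitely often, so the empirical weight $\rho_{j_0}$ vanishes, or it returns at some $k_0>N$, in which case continuity of $f$ makes $f^{k_0}(I_{j_0})\cap I_{j_0}$ an interval of positive Lebesgue measure on which $\mu_\rho$ vanishes, again forcing $\rho_{j_0}=0$. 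Your route instead derives the $f$-invariance of $\mu_\rho$ from $P\rho=\rho$, converts the forward wandering condition into the backward return condition via $U\cap f^{-n}(U)\subseteq f^{-n}\bigl(f^{n}(U)\cap U\bigr)$, invokes Poincar\'e recurrence on the probability space $([a,b],\mu_\rho,f)$ to conclude $\mu_\rho(U)=0$, and finishes with a Lindel\"of covering of $(x^A,x^B)$. What your approach buys is rigor and generality: it needs only that $\rho$ is a fixed point of the Frobenius--Perron operator, whereas the paper's argument implicitly identifies $\rho$ with the empirical histogram of a single orbit started inside $U$ (a Birkhoff/ergodicity-type assumption that is not justified for an arbitrary invariant density, and fails in non-ergodic situations) and treats $\rho$ as piecewise constant on the partition. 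What the paper's approach buys is a direct link to the numerical procedure actually used in its Section II and a more ``physical'' picture of why non-returning cells carry no invariant weight; as a proof, however, it is heuristic where yours is complete.
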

\begin{proof}

$(a)\Longrightarrow(b):$ If $\rho(x)=0$ a.e. in $(x^A,x^B)$ then it is clear from \eqref{2-3} that $d(x^A,x^B)=0$, which means by definition that $\overline{x}^A=\overline{x}^B$.

$(b)\Longrightarrow(c):$ It follows by the definition \eqref{2-4}.

$(c)\Longrightarrow(d):$ Let us assume that $d(x^A,x^B)=0$. Then, by \eqref{2-8bis} one obtains $C(x^A)=C(x^B)$. Now, since $C(x)$ is an increasing function of $x$ then $C(x)$ is constant in $[x^A,x^B]$.

$(d)\Longrightarrow(e):$ Let $x$ be a point of $(x^A,x^B)$. Since $(x^A,x^B)$ is a neighbourhood of $x$ and $C(x)$ is constant in $[x^A,x^B]$ one has $C(x^B)-C(x^A)=\int_{x^A}^{x^B}\rho(x)dx=\mu_{\rho}((x^A,x^B))=0$. In particular, $f^{n}(x^A,x^B)\cap(x^A,x^B)\subseteq~ (x^A,x^B)$ for all $n\in\mathbb{N}$ from which follows that $\mu_{\rho}(f^{n}(x^A,x^B)\cap(x^A,x^B))=0$, i.e. $x$ is a wandering point.

$(e)\Longrightarrow(a):$ Let $x$ be a point of $(x^A,x^B)$. By hypothesis, there exist an neighbourhood $U$ of $x$ and a positive integer $N$ such that
$\mu_{\rho}(f^n(U)\cap U)=0$ for all $n>N$. In particular, one can repeat the histogram construction for $\rho$ by dividing the interval $[a,b]$ in $M$ subintervals $I_1,\ldots,I_M$ of equal length in a such way that $x,x_0\in I_{j_0}\subset U$ for some $j_0\in \{1,\ldots,M\}$. That is, one has the invariant density $\rho(x)=\sum_{i=1}^{M}\rho_i\chi_{I_i}(x)$ with $\chi_{I_i}(x)$ the characteristic function of $I_i$ for all $i=1,\ldots,M$.
Then, one has two situations that are mutually excluding.

If $f^{k}(x_0)\notin I_{j_0}$ for all $k>N$ then the subinterval $I_{j_0}$ is only finitely visited in the limit of large iterations $T\gg M$ which means that $\rho_{j_0}=0$, i.e. $\rho=0$ a.e. in $I_{j_0}$.

On the contrary, if $f^{k_0}(x_0)\in I_{j_0}$ for some $k_0>N$ then $f^{k_0}(I_{j_0})\cap I_{j_0}$ is a nonempty set, and therefore, since $f$ is assumed to be a continuous function it follows that $f^{k_0}(I_{j_0})\cap I_{j_0}$ is an nonempty interval. Thus, $|f^{k_0}(I_{j_0})\cap~ I_{j_0}|>0$ with $| \ . \ |$ the Lebesgue measure in $\mathbb{R}$. Moreover, since
$\mu_{\rho}(f^{k_0}(I_{j_0})\cap I_{j_0})=0$ one has
\begin{eqnarray}
&\int_{f^{k_0}(I_{j_0})\cap I_{j_0}}\rho(x)dx=\int_{f^{k_0}(I_{j_0})\cap I_{j_0}}\rho_{j_0}\chi_{I_{j_0}}(x)\nonumber\\
&=\rho_{j_0}|f^{k_0}(I_{j_0})\cap I_{j_0}|=0 \nonumber
\end{eqnarray}
with $|f^{k_0}(I_{j_0})\cap I_{j_0}|>0$. This implies that $\rho_{j_0}=0$ and then $\rho=0$ a.e. in $I_{j_0}$. In both cases one obtains $\rho=0$ a.e. in $I_{j_0}$ with with $x\in I_{j_0}\subset(x^A,x^B)$. Therefore, one has that $\rho=0$ a.e. in $(x^A,x^B)$.
\end{proof}
Physically, the wandering set is an important concept since when a dynamical system has a wandering set of nonzero measure, then the system is  \emph{dissipative}. Thus, the wandering set characterizes dissipation in dynamical systems and in maps. Theorem \ref{theorem} states that dissipation in discrete maps can be expressed by means of the metric or alternatively, by using the cumulative distribution function.

\section{Models and results}\label{sec_models}
In order to illustrate the ideas introduced in the previous section we consider two examples of chaotic maps: the logistic map and the circle map. The interest on the logistic map is, basically, that it contains all the features of the chaotic dynamics and the onset of chaos \cite{tirnakli}. The circle map describes a simplified model of the phase--locked loop in electronics, and also has been used to study the dynamical behavior of a beating heart \cite{circle1}, among other applications.

\subsection{The logistic map}
The logistic map is defined by the sequence \cite{may,strogatz}
\begin{eqnarray}\label{2-9}
x_{n+1}=rx_n(1-x_n) \ \ \ \ \forall \ n\in\mathbb{N}_0 \nonumber
\end{eqnarray}
i.e., in the form \eqref{2-7bis} where $f_r:[0,1]\rightarrow[0,1]$ is given by $f_r(x)=rx(1-x)$, depending on a unique parameter $r\in(0,4]$.
The dynamics has been characterized for all $r$. Here we focus on some special characteristic values which give rise to different relevant dynamics. Specifically, we consider $r$ equal to $1$, $2$, $3.56995\ldots$, $3.82843$ and $4$.

For $r=1$ one has a regular behavior where for any initial condition $x_0$, the sequence $\{x_n\}$ goes to zero, i.e. $x=0$ is an attractor for all point of $[0,1]$.

For $r=2$ the dynamics is also regular and all sequences $\{x_n\}$ tend to the fixed point $x=1/2$.

At $r=r_c=3.56995\ldots$ the onset of chaos occurs where from almost all initial conditions oscillations of finite period are not observed.

The value $r^*=3.82843$ corresponds to the region known as the ``Pomeau--Manneville scenario" characterized by a periodic (laminar) phase interrupted by bursts of aperiodic behavior, with a mixed dynamics composed by chaotic trajectories and \emph{islands of stability} that show non--chaotic behavior.

Finally, for $r=4$ the behavior is fully chaotic and the dynamics satisfies the mixing property, with no oscillations and the interval $(0,1)$ is an attractor for any initial condition $x_0$.


We calculate the invariant density for the logistic map.
For the regular cases $r=1$ and $r=2$, since all the sequences tend to only one fixed point, it follows that the invariant density is a delta function; this fact can also be argued by using the Frobenius--Perron operator and Eqs.~\eqref{1-2}.
Indeed, for $r=1$ and $r=2$ it is satisfied that $\rho_1(x)=\delta(x)$ and $\rho_2(x)=\delta(x-\frac{1}{2})$, i.e. $x=0$ and $x=\frac{1}{2}$ are attractors, respectively, for all initial condition; the only point that can be distinguished in each case is precisely the attractor. For $r=1$ all the interval $[0,1]$ collapses into two classes: $\overline{0}$ and $\overline{1}$, and for $r=2$ the classes are: $\overline{0}$, $\overline{\frac{1}{2}}$ and $\overline{1}$.
For the fully chaotic case $r=4$, as mentioned in Sec.~\ref{sec_prelim}, an analytic result is available; indeed the invariant density has the form $\rho_4(x)=\frac{1}{\pi\sqrt{x(1-x)}}$.

For the cases $r=r_c$ and $r^*$, and also for $r=4$, we computed numerically the invariant density following the method described in Section~2.
We carried out this procedure for an initial state $x_0=0.1$, and taking $M=10^3$ and $T=10^5$. The process was repeated for other initial states, leading the same result. This means that despite the sensitivity of trajectories to initial states, the invariant density remains the same when a large number of iterations is considered.
The results are given in Fig.~1.
In the fully chaotic case, one can see the agreement of the numerical computation with the analytical result for the invariant density~$\rho_4$.
\begin{figure}\label{fig:logistic}
  \centerline{\includegraphics[width=9.5cm]{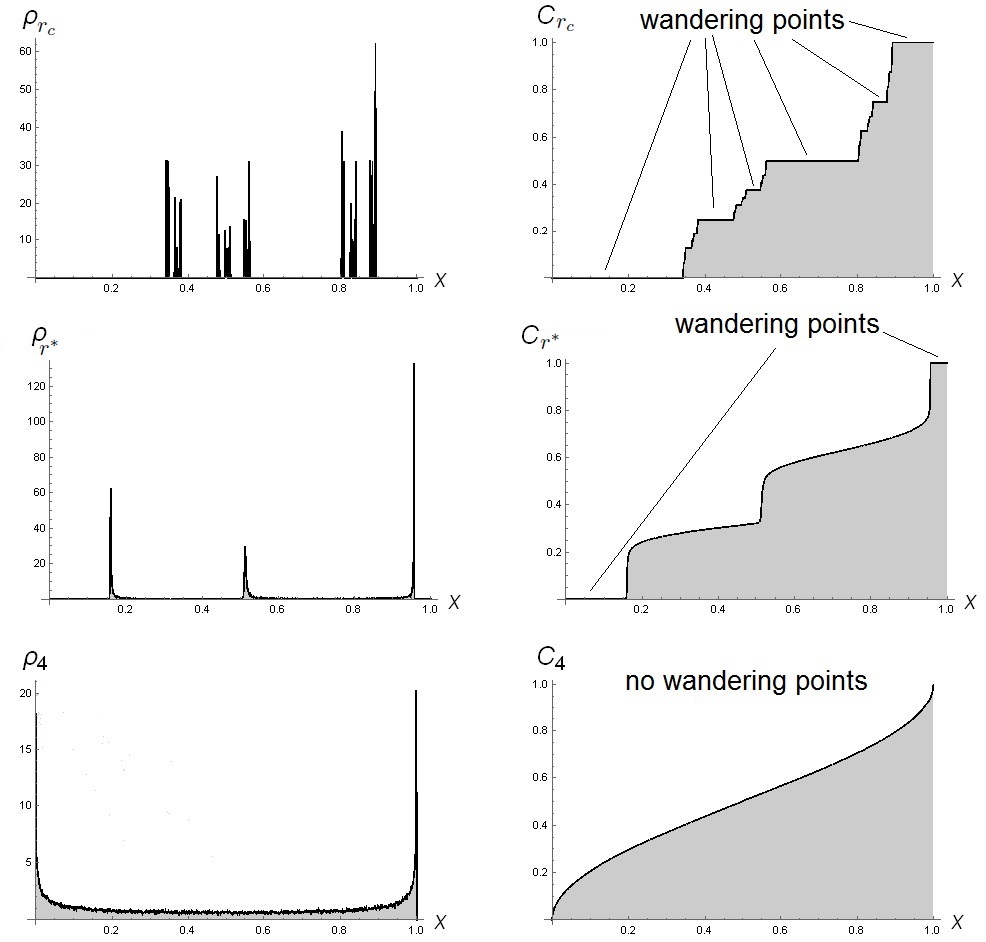}}
  \caption{Invariant densities (left column) and their cumulative distribution functions (right column) of the logistic map for $r=r_c$, $r=r^*$ and $r=4$ with $T=10^5$ and $M=10^3$. From Theorem~\ref{theorem} one has that the plateaus of $C_r(x)$ allow to identify the dissipative regions of the dynamics, i.e. the wandering points. The transition to a fully chaotic dynamics implies the total suppression of the wandering set as a consequence of the emergence of the chaotic sea.}
  \end{figure}

Now we focus on the study of  the metric space $([0,1]/\sim_{r},\overline{d}_r)$, the wandering set $\mathcal{W}_r([0,1])$, and the cumulative distribution function $C_r(x)$.
For $r=1, 2$, and 4 analytical results can be provided from Eqs.~\eqref{2-3}--\eqref{2-5} and Theorem~\ref{theorem}.

We obtain:
\begin{itemize}
  \item Case $r=1$: \\
   $[0,1]/\sim_{1} \ = \ \{\overline{0} \ , \ \overline{1}\}$ , with
  \begin{eqnarray}\label{2-10}
\overline{d}_{1}(\overline{0},\overline{1})=1 , \nonumber
\end{eqnarray}
\begin{eqnarray}\label{2-11}
C_1(x)=1 \quad \forall \ x\in[0,1] \nonumber
\end{eqnarray}
and
\begin{eqnarray}\label{2-12}
\mathcal{W}_1([0,1])=(0,1] . \nonumber
\end{eqnarray}
  \item Case $r=2$: \\
  $[0,1]/\sim_{2} \ = \ \{\overline{0} \ , \ \overline{\frac{1}{2}}\ , \ \overline{1}\}$ , with
\begin{eqnarray}\label{2-13}
\overline{d}_{2}(\overline{0},\overline{1})=\overline{d}_{2}(\overline{0},\overline{\frac{1}{2}})=\overline{d}_{2}(\overline{\frac{1}{2}},\overline{1})=1 , \nonumber
\end{eqnarray}
\begin{eqnarray}\label{2-14}
C_2(x)=2\,\Theta(x-1/2) \quad \forall \ x\in[0,1] \nonumber
\end{eqnarray}
and
\begin{eqnarray}\label{2-15}
\mathcal{W}_2([0,1])=[0,\frac{1}{2})\cup(\frac{1}{2},1] . \nonumber
\end{eqnarray}

\item Case $r=4$: \\
$[0,1]/\!\sim_{4} \ = \{\overline{x} : \ x\in [0,1]\}$ , with
\begin{eqnarray}\label{2-20}
\overline{d}_{4}(\overline{x}^A,\overline{x}^B)=|x^B-x^A| , \nonumber
\end{eqnarray}
\begin{eqnarray}\label{2-21}
C_4(x)=\frac{2}{\pi}\arccos(x) \quad \forall \ x\in[0,1] \nonumber
\end{eqnarray}
and
\begin{eqnarray}\label{2-21}
\mathcal{W}_4([0,1])=\emptyset . \nonumber
\end{eqnarray}

\end{itemize}

For $r=4$, we notice that making the change of variables $x=\textrm{sin}^2 (\frac{\pi y}{2})$, which is a diffeomorphism of $[0,1]$ on itself, one has \ $dx=\pi\,\textrm{cos}\frac{\pi y}{2}\textrm{sin}\frac{\pi y}{2}\,dy$. Therefore
\begin{eqnarray}\label{2-17}
&\rho_4(x)\,dx=\frac{1}{\pi\sqrt{\textrm{sin}^2 \frac{\pi y}{2}(1-\textrm{sin}^2 \frac{\pi y}{2})}} \, \pi\,\textrm{cos}\frac{\pi y}{2}\textrm{sin}\frac{\pi y}{2}\,dy = dy \nonumber
\end{eqnarray}
This expresses that the distance $d_4$ is equivalent to
\begin{eqnarray}\label{2-18}
d^{\prime}_4(y^A,y^B)=\left|\int_{y^A}^{y^B}dy\right|=|y^B-y^A| \nonumber
\end{eqnarray}
which is nothing but the Euclidean metric for the set of real numbers. Therefore, one can use $d^{\prime}_4$ instead of the metric that results from integrating the invariant density $\rho_4(x)$. The wandering set has been totally suppressed giving rise to the chaotic sea along the whole interval $(0,1)$, in agreement with Theorem \ref{theorem}.


Now, for the cases $r=r_c$ and $r=r^*$, we perform a qualitative analysis from the numerical results depicted in Fig.~1.
\begin{itemize}
  \item Case $r=r_c$:\\
  The complexity of dynamics is manifested by the substructures in the profile of the cumulative distribution $C_{r_c}(x)$, as shown in Fig. 1. From Theorem \ref{theorem} one can see that the plateaus of $C_{r_c}(x)$ account for the dissipative regions, i.e. the wandering points. As a consequence of the onset of chaos, the dynamics is mixed and composed by some stability islands represented by the increasing intervals of $C_{r_c}(x)$ immersed in the emergent chaotic sea. As soon as the parameter $r$ increases slightly the plateaus of $C_{r_c}(x)$ tend to disappear meaning that the transition to the fully chaotic regime is associated with a suppression of the wandering set.
 \item Case $r=r^*$: \\
 The only stability islands are given by the the few jump discontinuities observed in $C_{r^*}(x)$ while the wandering set has been reduced to $[0,0.15)\cup~ (0.95,1]$, as one can see from Fig. 1. The plateaus observed for $r_c$ now are part of the chaotic sea representing increasing intervals of the cumulative distribution $C_{r^*}(x)$.
\end{itemize}

\subsection{The circle map}

With regard the dynamics on the unitary circle $S^1$, Kolmogorov proposed a family of simplified models for driven mechanical rotors whose discretized equations define a map, called the \emph{circle map}. This is defined by the sequence
\begin{eqnarray}\label{2-21}
\theta_{n+1}=\theta_{n}+\Omega-\frac{K}{2\pi}\sin(2\pi\theta_n)\ \ \ \ \forall \ n\in\mathbb{N}_0 \nonumber
\end{eqnarray}
which has the form \eqref{2-7bis} where $f_{(\Omega,K)}:[0,1]\rightarrow[0,1]$ is given by $f_{(\Omega,K)}(\theta)=\theta+\Omega-\frac{K}{2\pi}\sin(2\pi \theta)$, depending on a two parameters, the coupling strength $K$ and the driving phase $\Omega$. Also, $S^1$ is represented by the interval $[0,1)$, i.e. the angle $\theta$ is expressed in units of $2\pi$.

When $K=0$ the map is a rotation in an angle $\Omega$. If one also chooses $\Omega$ irrational, the map reduces to an irrational rotation.
The dynamics has been studied for multiple combinations of the values $\Omega$ and $K$ exhibiting a very rich structure such as Cantor functions by plotting sections, subaharmonic routes of chaos, and the phenomena of phase locking. The complex structure of the dynamics has proved its usefulness in several applications \cite{circle1,circle2,circle3}.
Here we focus on some special characteristic values where the dynamics can be well distinguished. We consider the values $(\frac{1}{8},0)$, $(\frac{\sqrt{2}}{2\pi},0)$, and $(0,3)$ for the pair $(\Omega,K)$.

For $\Omega=\frac{1}{8}$ and $K=0$ one has a rotation in an angle $\frac{\pi}{4}$ with a regular behavior for any initial condition $\theta_0$. The sequence $\{\theta_n\}$ only takes the values $\theta_0$, $\theta_0+i\frac{1}{8}$ with $i=0,\ldots,7$, i.e. these ones are the only attractors for all initial condition.

For $\Omega=\frac{\sqrt{2}}{2\pi}$ and $K=0$ one has the irrational rotation which corresponds to an ergodic dynamics. In fact, for any initial condition $\theta_0$ the sequence $\{\theta_n\}$ travels around the entire phase space.

Finally, for $\Omega=0$ and $K=3$ the behavior is fully chaotic and the dynamics is mixing. As in the case of the irrational rotation, every sequence $\{\theta_n\}$ fill all the phase space but randomly.
\begin{figure}[!!ht]\label{fig2}
  \centerline{\includegraphics[width=8cm]{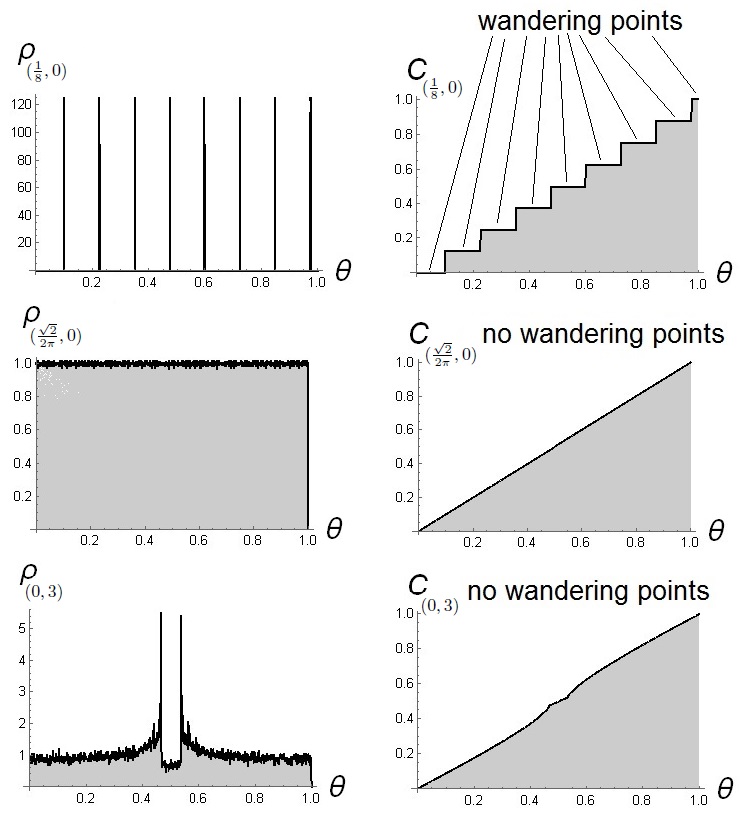}}
  \caption{Invariant densities (left column) and their cumulative distribution functions (right column) of the circle map for $(\Omega,K)=(\frac{1}{8},0)$, $(\Omega,K)=(\frac{\sqrt{2}}{2\pi},0)$ and $(\Omega,K)=(0,3)$ with $T=10^5$ and $M=10^3$. Again, the Theorem \ref{theorem} allows to identify the plateaus of $C_{(\Omega,K)}(\theta)$ with the dissipative regions of the dynamics. As in the logistic case, the transition from a regular dynamics to a chaotic one is associated to the supression of the wandering set.}
  \end{figure}

Carrying out the histogram method of Section \ref{sec_prelim} for an initial state $\theta_0=0.1$ and considering $M=10^3$ and $T=10^5$ we computed the invariant densities and their cumulative distribution functions for $(\Omega,K)=(\frac{1}{8},0)$, $(\Omega,K)=(\frac{\sqrt{2}}{2\pi},0)$ and $(\Omega,K)=(0,3)$. The results are shown in Fig. 2.
By applying \eqref{2-3}--\eqref{2-8} and the Theorem \ref{theorem} we obtain the metric space $(\Gamma/\sim_{(\Omega,K)}~,\overline{d}_{(\Omega,K)})$, the wandering set $\mathcal{W}_{(\Omega,K)}([0,1))$, and the cumulative distribution function $C_{(\Omega,K)}(\theta)$ for each one of the cases studied:
\begin{itemize}
\item Case $(\Omega,K)=(\frac{1}{8},0)$: \\
  $\Gamma/\sim_{(\frac{1}{8},0)}=$\\
  $\{\overline{\frac{i}{8}}:i=1,\ldots,8\}\cup \{\overline{\frac{2j+1}{16}}:j=0,\ldots,7\} $, with
  \begin{eqnarray}\label{2-22}
&\overline{d}_{(\frac{1}{8},0)}(\overline{\frac{i_1}{8}},\overline{\frac{i_2}{8}})=\frac{1}{8}(|i_1-i_2|+\chi_{(0,+\infty)}(|i_1-i_2|))  \nonumber\\
& \forall \ \ i_1,i_2=1,\ldots,8 \nonumber\\
&\nonumber\\
&\overline{d}_{(\frac{1}{8},0)}(\overline{\frac{2j_1+1}{16}},\overline{\frac{2j_2+1}{16}})=\frac{1}{8}|j_1-j_2| \nonumber\\
&\ \forall \ \ j_1,j_2=0,\ldots,7 \nonumber\\
\nonumber\\
&\overline{d}_{(\frac{1}{8},0)}(\overline{\frac{i_1}{8}},\overline{\frac{2j_2+1}{16}})=\frac{1}{8}(|i_1-j_2|+\delta_{|i_1-j_2|,0}) \nonumber\\
&\forall \ \ i_1=1,\ldots,8 , \ \ j_2=0,\ldots,7 \nonumber
\end{eqnarray}
\begin{eqnarray}\label{2-23}
C_{(\frac{1}{8},0)}(\theta)=\frac{i}{8} \ \ \ \forall \ \theta\in[\frac{i}{8},\frac{i+1}{8}) \ \ , \ \ i=0,\ldots,7\nonumber
\end{eqnarray}
and
\begin{eqnarray}\label{2-24}
\mathcal{W}_{(\frac{1}{8},0)}([0,1))=\{\overline{\theta} : \ \theta\in [0,1)\}-\left\{\overline{\frac{i}{8}}:i=1,\ldots,8\right\} \nonumber
\end{eqnarray}
\item Cases $(\Omega,K)=(\frac{\sqrt{2}}{2\pi},0)$ and $(\Omega,K)=(0,3)$: \\
$\Gamma/\sim_{(\frac{\sqrt{2}}{2\pi},0)}=\Gamma/\sim_{(0,3)}=\{\overline{\theta} : \ \theta\in [0,1)\}$ , with
\begin{eqnarray}\label{2-25}
\overline{d}_{(\frac{\sqrt{2}}{2\pi},0)}(\overline{\theta}_1,\overline{\theta}_2)=\overline{d}_{(0,3)}(\overline{\theta}_1,\overline{\theta}_2)=|\theta_1-\theta_2| \nonumber
\end{eqnarray}
\begin{eqnarray}\label{2-26}
C_{(\frac{\sqrt{2}}{2\pi},0)}(\theta)=C_{(0,3)}(\theta)=\theta \ \ \ , \ \ \ \forall \ \theta\in[0,1) \nonumber
\end{eqnarray}
and
\begin{eqnarray}\label{2-27}
\mathcal{W}_{(\frac{\sqrt{2}}{2\pi},0)}([0,1))=\mathcal{W}_{(0,3)}([0,1))=\emptyset  \nonumber
\end{eqnarray}
\end{itemize}
As in the logistic case, one can see that the transition from a regular dynamics (the rotation in an angle $\frac{\pi}{4}$ for $\Omega=\frac{1}{8}$ and $K=0$) to a chaotic one (the ergodic irrational rotation for $\Omega=\frac{\sqrt{2}}{2\pi},K=0$ and the mixing regime $\Omega=0,K=3$) is associated to a suppression of the wandering set.

Moreover, from Fig. 2 one finds that the two levels of chaos involved in the cases, ergodic for $\Omega=\frac{\sqrt{2}}{2\pi},K=0$ and mixing for $\Omega=0,K=3$, can be only distinguished by means of the profile of their respective invariant densities since the cumulative distributions, the wandering sets, and the metric spaces are the same in both cases.
An explanation of this fact based on the statistical information of the dynamics of the map contained in its invariant density can be given as follows.
The signatures of chaos in the invariant density are expressed as statistical fluctuations observed in its profile which are more pronounced in the mixing case than in the ergodic case, as one can see from Fig. 2. By integrating the invariant density one averages these fluctuations and then they disappear. Thus, one obtains the same cumulative distribution in both cases, and therefore, the wandering set and the metric space also must be remain the same.

\vspace{0.2cm}
Our analytical results about the characterization of the map dynamics of the logistic and the circle maps are summarized in the Table I.

\begin{table}[!hbt]
\begin{center}
\begin{tabular}{|c | c | c|}
\hline
parameters & metric space $\Gamma/\!\sim_{\vec{r}}$ & wandering set $\mathcal{W}(\Gamma)$  \\
\hline
r=1 & $\{\overline{0} \ , \ \overline{1}\}$ & (0,1] \\
\hline
r=2&$\{\overline{0},\overline{\frac{1}{2}},\overline{1}\}$  & $[0,\frac{1}{2})\cup(\frac{1}{2},0]$\\
\hline
r=4&$\{\overline{x} : \ x\in [0,1]\}$& $\emptyset$\\
\hline
$\Omega=\frac{1}{8},K=0$&$\{\overline{\frac{i}{8}}:i=1,\ldots,8\}\cup$& $\{\overline{\theta} : \ \theta\in [0,1)\}$\\
&$\{\overline{\frac{2j+1}{16}}:j=0,\ldots,7\}$  &$-\left\{\overline{\frac{i}{8}}:i=1,\ldots,8\right\}$  \\
\hline
$\Omega=\frac{\sqrt{2}}{2\pi},K=0$&$\{\overline{\theta} : \ \theta\in [0,1)\}$& $\emptyset$\\
\hline
$\Omega=0,K=3$&$\{\overline{\theta} : \ \theta\in [0,1)\}$& $\emptyset$\\
\hline
\end{tabular}
\caption{Relationship between the metric space and the wandering set for the logistic and the circle maps. In both cases it is observed that the dissipation (measured by the wandering set) and the chaotic behavior are contrary properties, i.e. when increasing one the other decreases and viceversa, as a consequence of Theorem \ref{theorem}.}
\end{center}
\end{table}

\section{Conclusions}\label{sec_conclusions}

We proposed a notion of distinguishability for points of discrete maps of arbitrary dimension
in the limit where the number of iterations tends to infinity. Basing us in a statistical distance for unidimensional maps defined by means of the invariant density and using equivalence classes instead of points, we redefined it as a metric satisfying the axioms of distinguishability, symmetry and the triangular inequality. Furthermore, from this metric we defined a metric space associated to the map
which is composed by all the points that can be distinguished among the map dynamics in the limit of large number of iterations.
The importance of the metric space is that by means of changes in its structure it allows one to study the transitions of the dynamics, as the map parameters vary.

Complementary, we proved the Theorem \ref{theorem} that characterizes the wandering set of a map, which measures the dissipation present in the dynamics, in terms of the metric space and the cumulative distribution function of the invariant density. As consequence of the Theorem \ref{theorem}, we obtained that the size of the associated metric space increases as the corresponding to the wandering set decreases and viceversa.

By analyzing the dynamics of logistic map and of the circle map in terms of the associated metric space and of the wandering set for some characteristic values of their parameters, we illustrated the relevance of the associated metric space and the content of Theorem \ref{theorem} which lead us to the following facts:
\begin{itemize}
\item In both maps, when the metric space is finite the dynamics is regular and the wandering set is the whole phase space except for some discrete points that correspond to the classes composed by a single element. These single classes are precisely the only attractors of the dynamics.
\item In both maps, the chaotic regimes studied are characterized by a maximal size of the metric space and a minimal size of the wandering set. In their fully chaotic cases, i.e. $r=4$ in the logistic map and $\Omega=0,K=3$ in the circle map, the metric space is all the entire phase space and the wandering set is empty, corresponding to the absent of dissipation in the dynamics.
\item In the case of the logistic map the complex structure in the onset of chaos when $r=r_c$ is expressed by substructures in the profile cumulative distribution function of the invariant density.
\item In both maps it is observed that the transition from a less chaotic dynamics to a more chaotic one is associated to a suppression of the wandering set, making plausible to consider this fact as a signature of chaos of the dynamics of a map.
\item In the case of the circle map, the impossibility of distinguish an ergodic dynamics ($\Omega=\frac{\sqrt{2}}{2\pi},K=0$) from a mixing one ($\Omega=0,K=3$) is explained on the basis that by integrating statistical fluctuations of different invariant densities can lead to the same cumulative distribution function, and consequently to the same metric space and wandering set associated with the map dynamics.
\end{itemize}
\begin{figure}[!!ht]\label{fig3}
  \centerline{\includegraphics[width=9cm]{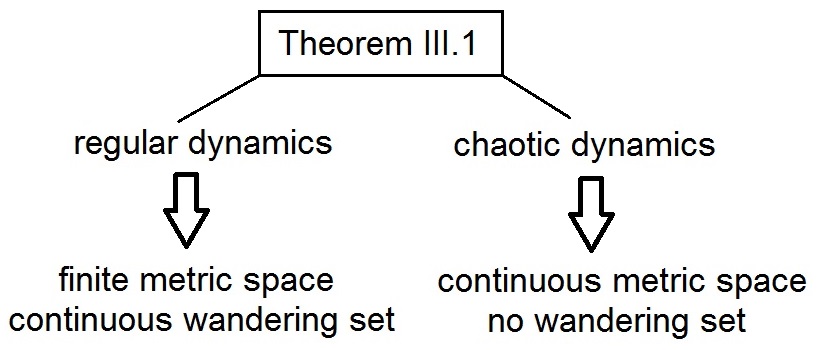}}
  \caption{An schematic picture of Theorem III.1 illustrating its consequences with regard the dynamics of the maps studied, i.e. the logistic and the circle maps.}
  \end{figure}

From all these facts we can consider that the proposed metric space associated to the dynamics of a map is a good and novel indicator along with Theorem \ref{theorem}, for characterizing and distinguishing a regular dynamics from a chaotic one. We hope that this conclusion can be analyzed with more examples, and for more dimensions than one, in future researches.

\section*{ACKNOWLEDGMENTS}
This work was partially supported by CONICET, Universidad Nacional de La Plata and Universidad Nacional de C\'{o}rdoba, Argentina.

\end{document}